\crefname{section}{Section}{Sections}
\crefname{subsection}{Section}{Sections}
\crefname{definition}{Definition}{Definitions}
\DeclareMathOperator{\R}{\mathbb{R}}
\DeclareMathOperator{\1}{\mathbb{1}}
\DeclareMathOperator{\e}{e}
\DeclareMathOperator{\D}{\mathrm{d}\!}
\DeclareMathOperator{\lnorm}{\mathcal{L}}
\DeclareMathOperator{\tr}{tr}
\DeclareMathOperator{\SVD}{SVD}
\DeclareMathOperator{\EVD}{EVD}
\newcommand{\T}{\intercal}
\newcommand{\emgr}{\texttt{emgr}\xspace }
\newcommand{\hl}[1]{\bgroup\markoverwith
  {\textcolor{#1}{\rule[-.5ex]{2pt}{2.5ex}}}\ULon}
\title{\emgr\ - The Empirical Gramian Framework}
\author{Christian Himpe\thanks{ORCiD: \href{http://orcid.org/0000-0003-2194-6754}{0000-0003-2194-6754}, Contact: \href{mailto:himpe@mpi-magdeburg.mpg.de}{\nolinkurl{himpe@mpi-magdeburg.mpg.de}},  
Computational Methods in Systems and Control Theory Group at the Max Planck Institute for Dynamics of Complex Technical Systems, Sandtorstra{\ss}e~1, D-39106 Magdeburg, Germany}}
\date{}
\newtheoremstyle{thm}{\topsep}{\topsep}{\normalfont \itshape}{}{\normalfont \bfseries}{}{\newline}{}
\theoremstyle{thm}
\newcounter{dummy}
\newtheorem{mycorollary}{Corollary}
\newtheorem{mydefine}[dummy]{Definition}
\begin{document}

\maketitle

\begin{abstract}\bfseries\noindent 
System Gramian matrices are a well-known encoding for properties of input-output systems such as controllability, observability or minimality. 
These so-called system Gramians were developed in linear system theory for applications such as model order reduction of control systems.
Empirical Gramian are an extension to the system Gramians for parametric and nonlinear systems as well as a data-driven method of computation.
The empirical Gramian framework - \emgr ~- implements the empirical Gramians in a uniform and configurable manner,
with applications such as Gramian-based (nonlinear) model reduction, decentralized control, sensitivity analysis, parameter identification and combined state and parameter reduction.
\end{abstract}

\begin{tabular}{rl}
 \textbf{Keywords:} & Model Reduction; Model Order Reduction; Decentralized Control; \\
                    & Sensitivity Analysis; Parameter Identification; Empirical Gramians; \\
                    & Nonlinear Systems; Reduced Order Systems; Controllability; Observability
\end{tabular}

~\\

\begin{tabular}{rl}
 \textbf{PACS:} & 02.30.Yy \\
 \textbf{MSC:} & 93A15, 93B20, 93C10 \\
\end{tabular}






\section*{Code Meta Data} 

\begin{center}
\begin{tabular}{rl}
 name (shortname) & EMpirical GRamian Framework (\texttt{emgr}) \\
 version (release-date) & 5.4 (2018-05-05) \\
 identifier (type) & \href{http://doi.org/10.5281/zenodo.1241532}{\texttt{doi:10.5281/zenodo.1241532}} (doi) \\
 authors (ORCIDs) & Christian Himpe (0000-0003-2194-6754) \\
 topic (type) & Model Reduction (toolbox) \\
 license (type) & 2-Clause BSD (open) \\
 repository (type) & \href{http://github.com/gramian/emgr}{\texttt{git:github.com/gramian/emgr}} (git) \\
 languages & Matlab \\
 dependencies & OCTAVE >= 4.2, MATLAB >= 2016b \\
 systems & Linux, Windows \\
 website & \url{http://gramian.de} \\
 keywords & empirical gramians, cross gramian, combined reduction
\end{tabular}
\end{center}

\pagebreak[4]

\section{Introduction} 
Attributes of input-output systems, such as controllability, observability or minimality can be assessed by special matrices.
These so-called system Gramian matrices, or short system Gramians\footnote{Also: Grammians or Gram matrix.}, have manifold applications in system theory, control theory and mathematical engineering.

Originally, Gramian-based methods were developed for linear systems \cite{Kal63}.
The empirical Gramian matrices \cite{morLalMG99} are an extension of Gramian-based methods to nonlinear and parametric systems.
This work summarizes the \textbf{em}pirical \textbf{Gr}amian framework (\emgr) \cite{morHim18a}, a compact software toolbox,
which implements various types of empirical Gramians as well as the related empirical covariance matrices.

An important use of empirical Gramian matrices is model order reduction (MOR),
utilizing the capability of system Gramians to quantify the input-output importance of the underlying system's states based on controllability and observability.
Several variants of Gramian-based model reduction are available, for example:
\begin{itemize}
 \item Linear Model Order Reduction \cite{morMoo81},
 \item Robust Model Reduction \cite{morSunH06a},
 \item Parametric Model Order Reduction (pMOR) \cite{morHimO15a},
 \item Nonlinear Model Order Reduction (nMOR) \cite{morLalMG99,morHahE00,morConI04a,morYaoDY08},
 \item Second-Order System Model Reduction \cite{morZhaC08},
 \item Combined State and Parameter Reduction \cite{morHimO14}.
\end{itemize}
Beyond model reduction, (empirical) system Gramians can also be utilized for tasks like:
\begin{itemize}
 \item Sensitivity Analysis \cite{StrFB06,LysHA16},
 \item Parameter Identification \cite{GefFSetal08,morTolA17},
 \item Decentralized Control \cite{MoaK06,ShaK12,ShaS13},
 \item Optimal Sensor Placement \cite{SinH05,SalOWetal16}, Optimal Actuator Placement \cite{SumCL15},
 \item Optimal Control \cite{morLawMC05}, Model Predictive Control \cite{morHahKE02},
 \item Nonlinearity Quantification \cite{morHahE01,JiaWJetal16}.
\end{itemize}
Also, various system invariants and indices are computable using system Gramians, and thus also by empirical Gramians:
\begin{itemize}
 \item System gain \cite{StrFB06},
 \item Cauchy index \cite{morFerN83a,ForF12},
 \item Information entropy index \cite{morFuZDetal10},
 \item Nyquist plot enclosed area \cite{HalCB10},
 \item System Frobenius norm and ellipsoid volume \cite{HriHC14},
 \item System $H_2$-norm \cite{morGugAB08}.
\end{itemize}

\pagebreak

This wide range of applications and the compatibility to nonlinear systems make empirical Gramians a versatile tool in many system-theoretic computations.
Furthermore, the empirical Gramians provide a data-driven method of computation with close relations to proper orthogonal decomposition (POD) and balanced POD (bPOD) \cite{morWilP02,morRow05}.

Various (Matlab) implementations are available for the computation of linear system Gramians by the solution of associated matrix equations,
such as the basic \texttt{sylvester} command, the \texttt{gram}, \texttt{lyap} and \texttt{covar} commands from the \textsc{Matlab} Control Toolbox\footnote{\url{http://mathworks.com/products/control}} and \textsc{Octave} Control Package\footnote{\url{http://octave.sourceforge.net/control}}.
For empirical Gramians, the only other generic implementation, to the author's best knowledge, is \cite{morSunH06b},
which provides only the empirical controllability Gramian and the empirical observability Gramian,
but \emph{not} any empirical cross Gramian (see \cref{def:wy,def:wx}).
This makes \emgr the unique (open-source) implementation of all three: the empirical controllability Gramian $W_C$, the empirical observability Gramian $W_O$ and the empirical cross Gramian $W_X$ (sometimes also symbolized by $W_{CO}$ and $X_{CG}$).

Lastly, it is noted that the term \emph{empirical Gramian} is used as an umbrella term for the original empirical Gramians \cite{morLalMG99}, the empirical covariance matrices \cite{morHahE02}, modified empirical Gramians \cite{morChoSB16} or local Gramians \cite{KreI09}.

\subsection{Aim} 
After its initial version $1.0$ (2013) release, accompanied by \cite{morHimO13},
the empirical Gramian framework\footnote{\texttt{emgr} is also listed in the \textbf{Oberwolfach References on Mathematical Software} (ORMS), no.~345: \url{http://orms.mfo.de/project?id=345}.} has been significantly enhanced.
Apart from extended functionality and accelerated performance,
various new concepts and features were implemented.
Now, with the release of version $5.4$ (2018) \cite{morHim18a},
this is a follow-up work illustrating the current state of \emgr and its applicability,
 as well as documenting the flexibility of this toolbox.
In short, the major changes involve:
\begin{itemize}
 \item Non-symmetric cross Gramian variant,
 \item linear cross Gramian variant,
 \item distributed cross Gramian variant and interface,
 \item inner product kernel interface,
 \item time-integrator interface,
 \item time-varying system compatibility,
 \item tensor-based trajectory storage,
 \item functional paradigm software design.
\end{itemize}

\subsection{Outline} 
This work is structured as follows: In \cref{sec:prelim} the empirical Gramian's main application,
projection-based model order reduction, is briefly described;
followed by \cref{sec:eg} presenting the mathematical definitions of the computable empirical Gramians.
\cref{sec:impl} summarizes the design decision for \emgr, while \cref{sec:intf} documents usage and configuration.
Numerical examples are demonstrated in \cref{sec:numex} and lastly,
in \cref{sec:sum}, a short concluding remark is given.

\section{Mathematical Preliminaries}\label{sec:prelim} 
The mathematical objects of interest are nonlinear parametric input-output systems,
which frequently occur in physical, chemical, biological and technical models or spatially discretized partial differential equations (PDE).
These control system models consist of a dynamical system (typically on $\R$, i.e. an ordinary differential equation (ODE)) as well as an output function,
and maps the input $u : \R_{>0} \to \R^M$ via the state $x : \R_{>0} \to \R^N$ to the output $y : \R_{>0} \to \R^Q$:
\begin{align}\label{eq:iosys}
\begin{split}
 \dot{x}(t) &= f(t,x(t),u(t),\theta), \\
       y(t) &= g(t,x(t),u(t),\theta).
\end{split}
\end{align}
The potentially nonlinear vector-field $f:\R_{>0} \times \R^N \times \R^M \times \R^P \to \R^N$ and output functional \mbox{$g:\R_{>0} \times \R^N \times \R^M \times \R^P \to \R^Q$} both depend on the time $t \in \R_{>0}$, the state $x(t)$, input or control $u(t)$ and the parameters $\theta \in \R^P$.
Together with an initial condition $x(0) = x_0 \in \R^N$, this setup constitutes an initial value problem.

\subsection{Model Reduction} 
The aim of model reduction is the algorithmic computation of surrogate reduced order models with lower computational complexity or memory footprint than the original full order model.
For the sake of brevity only combined state and parameter reduction is summarized here,
which includes state-space reduction, parametric state-space reduction and parameter-space reduction as special cases;
for an elaborate layout see \cite{morHimO16}.

Given the general, possibly nonlinear, input-output system \eqref{eq:iosys}, a combined state and parameter reduced order model: 
\begin{align*}
 \dot{x}_r(t) &= f_r(t,x_r(t),u(t),\theta_r), \\
 \tilde{y}(t) &= g_r(t,x_r(t),u(t),\theta_r), \\
       x_r(0) &= x_{r,0},
\end{align*}
with a reduced state $x_r : \R_{>0} \to \R^n$, $n \ll N$, and a reduced parameter \mbox{$\theta_r \in \R^p$}, $p \ll P$, is sought.
Accordingly, a reduced vector-field \mbox{$f_r :\R_{>0} \times \R^n \times \R^M \times \R^p \to \R^n$} and a reduced output functional \mbox{$g_r:\R_{>0} \times \R^n \times \R^M \times \R^p \to \R^Q$} describe the reduced system,
for which the reduced system's outputs $\tilde{y} : \R_{>0} \to \R^Q$ should exhibit a small error compared to the full order model,
yet preserving the parameter dependency:
\begin{align*}
 \|y(\theta)-\tilde{y}(\theta_r)\| \ll 1.
\end{align*}
A class of methods to obtain such a reduced order model with the associated requirements is described next.

\subsubsection{Projection-Based Combined Reduction} 
Projection-based combined state and parameter reduction is based on (bi-) orthogonal truncated projections for the state- and parameter-space respectively.
The state-space trajectory $x(t)$, not too far from a steady-state $\bar{x} \in \R^N$, $\bar{u} \in \R^M$, $f(t,\bar{x},\bar{u},\theta) = 0 \;\; \forall t$,
is approximated affinely using truncated reducing and reconstructing projections $U_1 \in \R^{N \times n}$ and $V_1 \in \R^{N \times n}$, with $V_1^\T U_1 = \1_n$:
\begin{align*}
 x_r(t) := V_1^\T (x(t) - \bar{x}) \Rightarrow x(t) \approx \bar{x} + U_1 x_r(t).
\end{align*}
The relevant parameter-space volume is also approximated by truncated reducing and reconstructing projections $\Pi_1 \in \R^{P \times p}$ and $\Lambda_1 \in \R^{P \times p}$, with $\Lambda_1^\T \Pi_1 = \1_p$:
\begin{align*}
 \theta_r := \Lambda_1^\T (\theta - \bar{\theta}) \Rightarrow \theta \approx \bar{\theta} + \Pi_1 \theta_r,
\end{align*}
relative to a nominal parameter $\bar{\theta}$.
Given these truncated projections, a projection-based reduced order model is then obtained by:
\begin{align*}
 \dot{x}_r(t) &= V_1^\T f(t,\bar{x} + U_1 x_r(t),u(t),\bar{\theta} + \Pi_1 \theta_r), \\
 \tilde{y}(t) &= \phantom{V_1^\T} g(t,\bar{x} + U_1 x_r(t),u(t),\bar{\theta} + \Pi_1 \theta_r), \\
       x_r(0) &= V_1^\T (x_0 - \bar{x}), \\
  \theta_r    &= \Lambda_1^\T (\theta - \bar{\theta}).
\end{align*}
Thus, to obtain a projection-based reduced order model with respect to the state- and parameter-space,
the overall task is determining the truncated projections $U_1$, $V_1$, $\Lambda_1$ and $\Pi_1$.

It should be noted, that this approach produces globally reduced order models,
meaning $U_1$, $V_1$, $\Lambda_1$, $\Pi_1$ are valid over the whole operating region,
which is an application-specific subspace of the Cartesian product of the full order state- and parameter-space $\R^N \times \R^P$.

\subsubsection{Gramian-Based Model Reduction} 
Gramian-based model reduction approximates the input-output behavior of a system by removing the least controllable \emph{and} observable state components.
To this end the system is transformed to a representation in which controllability and observability are balanced.
Given a controllability Gramian $W_C$ (\cref{def:wc}) and observability Gramian $W_O$ (\cref{def:wo}) to an input-output system, a balancing transformation \cite{morMoo81} is computable;
here in the variant from \cite{morGarB02}, utilizing the singular value decomposition (SVD): 
\begin{align*}
 W_C &\stackrel{\SVD}{=} U_C D_C U_C^\T, \\
 W_O &\stackrel{\SVD}{=} U_O D_O U_O^\T \\
 \rightarrow U_C D_C^{\frac{1}{2}} U_C^\T U_O D_O^{\frac{1}{2}} U_O^\T &= W_C^{\frac{1}{2}} W_O^{\frac{1}{2}} \stackrel{\SVD}{=} U D V^\T.
\end{align*}
Partitioning the columns of $U$ and $V$ based on the (Hankel) singular values in $D$, $D_{ii} = \sigma_i < \sigma_{i-1}$,
which indicate the balanced state's relevance to the system's input-output behavior,
\begin{align*}
 U &= \begin{pmatrix} U_1 & U_2 \end{pmatrix}, U_1 \in \R^{N \times n}, \\
 V &= \begin{pmatrix} V_1 & V_2 \end{pmatrix}, V_1 \in \R^{N \times n},
\end{align*}
and discarding the partitions associated to small singular values $\sigma_{n+1} \ll \sigma_n$,
corresponds to the balanced truncation method \cite{morMoo81,morAnt05}.

A cross Gramian $W_X$ (\cref{def:wx}) encodes both, controllability and observability, in a single linear operator.
For a symmetric system, a balancing transformation can then be obtained from an eigenvalue decomposition (EVD) of the cross Gramian \cite{morAld91,morBauB08}:
\begin{align*}
 W_X \stackrel{\EVD}{=} U D V^\T.
\end{align*}
Alternatively, an approximate balancing transformation is obtained from an SVD of the cross Gramian \cite{morSorA02,morHimO14}:
\begin{align*}
 W_X \stackrel{\SVD}{=} U D V^\T.
\end{align*}
The truncated projections, $U_1$ and $V_1$, are obtained in the same way as for balanced truncation.
Using only the left or only the right singular vectors of $W_X$ for the (truncated) projections of the state-space,
and their transpose as reverse transformation, results in orthogonal (Galerkin) projections \cite{morHahE02a}.
This approach is called direct truncation method \cite{ForF12,morHimO14}, i.e. $V := U^\T$.

Similarly, the parameter projection can be based on associated covariance matrices.
A transformation aligning the parameters along their principal axis,
resulting from an SVD of a such parameter covariance matrix $\omega$ \cite{morSunH06,morHimO14,morHim17}:
\begin{align*}
 \omega \stackrel{\SVD}{=} \Pi \Delta \Lambda,
\end{align*}
yields truncatable projections given by the singular vectors, with partitioning of $\Pi$ and $\Lambda$ based on the singular values in $\Delta$.


\section{Empirical Gramians}\label{sec:eg} 
Classically the controllability, observability and cross Gramians are computed for linear systems by solving (linear) matrix equations.
The empirical Gramians are a data-driven extension to the classic system Gramians, and do not depend on the linear system structure.
Computing system Gramians empirically by trajectory simulations, was already motivated in \cite{morMoo81} but systematically introduced in \cite{morLalMG99}.
The central idea behind the empirical Gramians is the averaging over local Gramians for any varying quantity,
such as inputs, initial states, parameters or time-dependent components around an operating point \cite{morKeiG03}.
In the following, first the empirical Gramians for state-space input-output coherence are summarized,
then the empirical Gramians for parameter-space identifiability and combined state and parameter evaluation are described.

\subsection{State-Space Empirical Gramians} 
Gramian-based controllability and observability analysis originates in linear system theory \cite{Hes09},
which investigates linear (time-invariant) systems,
\begin{align}\label{eq:linsys}
\begin{split}
 \dot{x}(t) &= Ax(t) + Bu(t), \\
       y(t) &= Cx(t).
\end{split}
\end{align}
An obvious approach for nonlinear systems is a linearization at a steady-state \cite{morMaD88},
but this may obfuscate the original transient dynamics \cite{morSinH05,morDonSP11}.
Alternatively, the nonlinear balancing for control affine systems from \cite{morSch93},
based on controllability and observability energy functions, could be used.
Yet practically, the associated nonlinear system Gramians require solutions to a Hamilton-Jacobi partial differential equation (nonlinear controllability Gramian) and a nonlinear Lyapunov equation (nonlinear observability Gramian) or a nonlinear Sylvester equation (nonlinear cross Gramian), which is currently not feasible for large-scale systems.
A compromise between linearized and nonlinear Gramians are empirical Gramians \cite{morLalMG99,morHahE00}.

Empirical Gramians are computed by systematically averaging system Gramians obtained from numerical simulations over locations in an operating region near a steady-state.
An operating region is defined in this context by sets of perturbations for inputs / controls and the steady-state.
Originally in \cite{morLalMG99}, these perturbation sets are constructed by the Cartesian product of sets of directions (standard unit vectors),
rotations (orthogonal matrices) and scales (positive scalars) for the input and steady-state respectively.
In the empirical Gramian framework, the rotations are limited to the set of the unit matrix and negative unit matrix, as suggested in \cite{morLalMG99}.
This constraint on the rotation entails many numerical simplifications and reduces the perturbation sets to directions (standard unit vectors) and scales (non-zero scalars):
\begin{align*}
 E_u &= \{ e^m \in \R^M : m = 1 \dots M, e^m_i = \delta_{im} \}, \\
 S_u &= \{ c_k \in \R : k = 1 \dots K, c_k \neq 0 \}, \\
 E_x &= \{ \epsilon^j \in \R^N : j = 1 \dots N, \epsilon^j_i = \delta_{ij} \}, \\
 S_x &= \{ d_l \in \R : l = 1 \dots L, d_l \neq 0 \}.
\end{align*}
Yet, only single input and state components can be perturbed at a time in this manner,
which is often practically sufficient.

The original empirical Gramians use a centering of the trajectories around the temporal average and solely use impulse input type controls $u(t) = \delta(t)$ \cite{morLalMG99}.
The related empirical covariance matrices center the trajectories around a steady-state and allow arbitrary step functions $u(t) = \sum_k v_k \chi_{[t_k,t_{k+1})}(t)$, $v_k \in \R$, \mbox{$t_k \in R_{\geq 0}$}, $t_{k+1} > t_k$ \cite{morHahE02,morHahEM03}.
The empirical Gramian framework allows to compute either as well as further centering variants (\cref{sec:flags}).
In the following, empirical Gramians and empirical covariance matrices will be jointly referred to by the term ``empirical Gramian''.

\subsubsection{Empirical Controllability Gramian}\label{def:wc}
The (linear) controllability\footnote{The term \textbf{controllability} is used instead of \textbf{reachability} as in \cite{morMoo81,morFerN83,morLalMG99}.} Gramian quantifies how well the state of an underlying linear system is driven by the input and is defined as:
\begin{align*}
 W_C := \int_0^\infty \e^{At} BB^\T \e^{A^\T t} \D t = \int_0^\infty (\e^{At} B) (\e^{A t} B)^\T \D t.
\end{align*}
The empirical variant is given by the following definition based on \cite{morLalMG99,morHahE02}.
\begin{mydefine}[Empirical Controllability Gramian]
Given non-empty sets $E_u$ and $S_u$, the \textbf{empirical controllability Gramian} $\widehat{W}_C \in \R^{N \times N}$ is defined as:
\begin{align*}
 \widehat{W}_C &= \frac{1}{|S_u|} \sum_{k=1}^{|S_u|} \sum_{m=1}^{M} \frac{1}{c_k^2} \int_0^T \Psi^{km}(t) \D t, \\
 \Psi^{km}(t) &= (x^{km}(t) - \bar{x}^{km})(x^{km}(t) - \bar{x}^{km})^\T \in \R^{N \times N},
\end{align*}
with the state trajectories $x^{km}(t)$ for the input configurations \linebreak $\hat{u}^{km}(t) = c_k e^m \circ u(t) + \bar{u}$,
and the offsets $\bar{u}$, $\bar{x}^{km}$.
\end{mydefine}
For an asymptotically stable linear system, delta impulse input $u_i(t) = \delta(t)$ and an arithmetic average over time as offset $\bar{x} = \frac{1}{T} \int_0^T x(t) \D t$,
the empirical controllability Gramian is equal to the controllability Gramian $\widehat{W}_C = W_C$ \cite{morLalMG99}.

\subsubsection{Empirical Observability Gramian}\label{def:wo}
The (linear) observability Gramian quantifies how well a change in the state of an underlying linear system is visible in the outputs and is defined as:
\begin{align*}
 W_O := \int_0^\infty \e^{A^\T t} C^\T C \e^{At} \D t = \int_0^\infty (\e^{A^\T t} C^\T) (\e^{A^\T t} C^\T)^\T \D t.
\end{align*}
The empirical variant is given by the following definition based on \cite{morLalMG99,morHahE02}.
\begin{mydefine}[Empirical Observability Gramian]
Given non-empty sets $E_x$ and $S_x$, the \textbf{empirical observability Gramian} $\widehat{W}_O \in \R^{N \times N}$ is defined as:
\begin{align*}
 \widehat{W}_O &= \frac{1}{|S_x|} \sum_{l=1}^{|S_x|} \frac{1}{d_l^2} \int_0^\infty \Psi^l(t) \D t, \\
    \Psi^l_{ij}(t) &= (y^{li}(t) - \bar{y}^{li})^\T (y^{lj}(t) - \bar{y}^{lj}) \in \R,  
\end{align*}
with the output trajectories $y^{li}(t)$ for the initial state configurations \linebreak $x_0^{li} = d_l \epsilon^i + \bar{x}$,
$u(t) = \bar{u}$, and the offsets $\bar{u}$, $\bar{x}$, $\bar{y}^{li}$.
\end{mydefine}
For an asymptotically stable linear system, no input and an arithmetic average over time as offset $\bar{y} = \frac{1}{T} \int_0^T y(t) \D t$,
the empirical observability Gramian is equal to the observability Gramian $\widehat{W}_O =W_O$ \cite{morLalMG99}.

\subsubsection{Empirical Linear Cross Gramian}\label{def:wy}
The (linear) cross Gramian \cite{morFer82,morFerN83} quantifies the controllability and observability, and thus minimality, of an underlying square, \mbox{$\dim(u(t)) = \dim(y(t))$}, linear system and is defined as:
\begin{align*}
 W_X := \int_0^\infty \e^{At} BC \e^{At} \D t = \int_0^\infty (\e^{At} B) (\e^{A^\T t} C^\T)^\T \D t.
\end{align*}
Augmenting the linear system's dynamical system component with its transposed system\footnote{The transposed system is equivalent to the negative adjoint system.},
induces an associated controllability Gramian of which the upper right block corresponds to the cross Gramian \cite{morFerN85,morSha12}:
\begin{align}\label{eq:lwx}
 \begin{pmatrix} \dot{x}(t) \\ \dot{z}(t) \end{pmatrix} = \begin{pmatrix} A & 0 \\ 0 & A^\T \end{pmatrix} \begin{pmatrix} x(t) \\ z(t) \end{pmatrix} + \begin{pmatrix} B \\ C^\T \end{pmatrix} u(t) \Rightarrow \overline{W}_C = \begin{pmatrix} W_C & W_X \\ W_X^\T & W_O \end{pmatrix}.
\end{align}

The empirical variant restricted to the upper right block of this augmented controllability Gramian \eqref{eq:lwx} is given by the following definition based on \cite{morBauBHetal17}.
\begin{mydefine}[Empirical Linear Cross Gramian]
Given non-empty sets $E_u$ and $S_u$, the \textbf{empirical linear cross Gramian} \mbox{$\widehat{W}_Y \in \R^{N \times N}$} is defined as:
\begin{align*}
 \widehat{W}_Y &= \frac{1}{|S_u|} \sum_{k=1}^{|S_u|} \sum_{m=1}^{M} \frac{1}{c_k^2} \int_0^T \Psi^{km}(t) \D t, \\
 \Psi^{km}(t) &= (x^{km}(t) - \bar{x}^{km})(z^{km}(t) - \bar{z}^{km})^\T \in \R^{N \times N},
\end{align*}
with the state trajectories $x^{km}(t)$ and adjoint state trajectories $z^{km}(t)$ for the input configurations $\hat{u}^{km}(t) = c_k e^m \circ u(t) + \bar{u}$,
and the offsets $\bar{u}$, $\bar{x}^{km}$, $\bar{z}^{km}$.
\end{mydefine}
For an asymptotically stable linear system, delta impulse input $u_i(t) = \delta(t)$ and an arithmetic average over time as offset $\bar{x} = \frac{1}{T} \int_0^T x(t) \D t$, $\bar{z} = \frac{1}{T} \int_0^T z(t) \D t$, the empirical linear cross Gramian is equal to the cross Gramian due to the result of the empirical controllability Gramian.
This approach is related to balanced POD \cite{morBarDN08}.

\subsubsection{Empirical Cross Gramian}\label{def:wx}
Analytically, a cross Gramian for (control-affine) nonlinear gradient systems was developed in \cite{morIonFS11,morFujS14},
yet the computation of this nonlinear cross Gramian\footnote{Also called \textbf{cross operator} or \textbf{cross map} in this context.} is infeasible for large systems.
For (nonlinear) SISO (Single-Input-Single-Output) systems, the empirical variant of the cross Gramian is developed in \cite{StrFB06},
for (nonlinear) MIMO (Multiple-Input-Multiple-Output) systems in \cite{morHimO14}.

\begin{mydefine}[Empirical Cross Gramian]
Given non-empty sets $E_u$, $E_x$, $S_u$ and $S_x$, the \textbf{empirical cross Gramian} $\widehat{W}_X \in \R^{N \times N}$ is defined as:
\begin{align*}
  \widehat{W}_X &= \frac{1}{|S_u||S_x| M} \sum_{k=1}^{|S_u|} \sum_{l=1}^{|S_x|} \sum_{m=1}^M \frac{1}{c_k d_l} \int_0^\infty \Psi^{klm}(t) \D t, \\
\Psi^{klm}_{ij} &= (x_i^{km}(t) - \bar{x}_i^{km})(y_m^{lj}(t) - \bar{y}_m^{lj}) \in \R,
\end{align*}
with the state trajectories $x^{km}(t)$ for the input configurations \linebreak $\hat{u}^{km}(t) = c_k e^m \circ u(t) + \bar{u}$,
the output trajectories $y^{lj}(t)$ for the initial state configurations $x_0^{lj} = d_l \epsilon^j + \bar{x}$,
and the offsets $\bar{u}$, $\bar{x}^{km}$, $\bar{x}$, $\bar{y}^{lj}$.
\end{mydefine}
For an asymptotically stable linear system, delta impulse input $u(t) = \delta(t)$ and an arithmetic averages over time as offsets $\bar{x} = \frac{1}{T} \int_0^T x(t) \D t$, $\bar{y} = \frac{1}{T} \int_0^T y(t) \D t$,
the empirical cross Gramian is equal to the cross Gramian $\widehat{W}_X = W_X$ \cite{StrFB06,morHimO14}.

\subsubsection{Empirical Non-Symmetric Cross Gramians}\label{def:wz}
The (empirical) cross Gramian is only computable for square systems,
and verifiably useful for symmetric or gradient systems \cite{morFer82,morSorA02,morHimO14}.
In \cite{morHimO16} an extension to the classic cross Gramian is proposed.
Based on results from decentralized control \cite{MoaK06}, a non-symmetric cross Gramian is computable for non-square systems and thus non-symmetric systems.
Given a partitioning of $B = [b_1, \dots, b_M]$, $b_i \in \R^{M \times 1}$ and $C = [c_1, \dots, c_Q]^\T$, $c_j \in \R^{1 \times Q}$ from the linear system \eqref{eq:linsys}, the (linear) non-symmetric cross Gramian is defined as:
\begin{align*}
 W_Z := \sum_{i=1}^M \sum_{j=1}^Q \int_0^\infty \e^{At} b_i c_j \e^{At} \D t = \int_0^\infty \e^{At} \Big(\sum_{i=1}^M b_i\Big) \Big(\sum_{j=1}^Q c_j\Big) \e^{At} \D t.
\end{align*}
For this cross Gramian to the associated ``average'' SISO system, an empirical variant is then given by:
\begin{mydefine}[Empirical Non-Symmetric Cross Gramian]
Given non-empty sets $E_u$, $E_x$, $S_u$ and $S_x$, the \textbf{empirical non-symmetric cross Gramian} \mbox{$\widehat{W}_Z \in \R^{N \times N}$} is defined as:
\begin{align*}
  \widehat{W}_Z &= \frac{1}{|S_u||S_x| M} \sum_{k=1}^{|S_u|} \sum_{l=1}^{|S_x|} \sum_{m=1}^M \sum_{q=1}^Q \frac{1}{c_k d_l} \int_0^\infty \Psi^{klmq}(t) \D t, \\
\Psi^{klmq}_{ij} &= (x_i^{km}(t) - \bar{x}_i^{km}) (y_q^{lj}(t) - \bar{y}_q^{lj}) \in \R,
\end{align*}
with the state trajectories $x^{km}(t)$ for the input configurations \linebreak $\hat{u}^{km}(t) = c_k e^m \circ u(t) + \bar{u}$,
the output trajectories $y^{lj}(t)$ for the initial state configurations $x_0^{lj} = d_l \epsilon^j + \bar{x}$,
and the offsets $\bar{u}$, $\bar{x}^{km}$, $\bar{x}$, $\bar{y}^{lj}$.
\end{mydefine}

\begin{mycorollary}
For an asymptotically stable linear system, delta impulse input $u(t) = \delta(t)$ and arithmetic averages over time as offsets $\bar{x} = \frac{1}{T} \int_0^T x(t) \D t$, $\bar{y} = \frac{1}{T} \int_0^T y(t) \D t$,
the empirical non-symmetric cross Gramian is equal to the cross Gramian \linebreak \mbox{$\widehat{W}_Z = W_Z$} of the average SISO system.
\end{mycorollary}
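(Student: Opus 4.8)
The plan is to reduce the statement to the already–established identity $\widehat{W}_X = W_X$ for the empirical cross Gramian (\cref{def:wx}), applied to the ``average'' SISO system. The second form in the definition of $W_Z$ exhibits it as
\begin{align*}
 W_Z = \int_0^\infty \e^{At}\, \tilde{b}\, \tilde{c}\, \e^{At} \D t, \qquad \tilde{b} := \sum_{i=1}^M b_i, \quad \tilde{c} := \sum_{j=1}^Q c_j,
\end{align*}
that is, $W_Z$ is precisely the linear cross Gramian of the SISO triple $(A,\tilde{b},\tilde{c})$. Hence it suffices to show that, under the stated hypotheses, $\widehat{W}_Z$ equals the empirical cross Gramian of $(A,\tilde{b},\tilde{c})$, and then to invoke $\widehat{W}_X = W_X$ for that system.

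First I would record the relevant linear trajectories. For an asymptotically stable system \eqref{eq:linsys} the equilibrium with vanishing input is $\bar{x} = 0$ (and $\bar{u} = 0$), so the impulse–response state for $\hat{u}^{km}(t) = c_k e^m \delta(t)$ is $x^{km}(t) = c_k \e^{At} b_m$ for $t > 0$, while the free output response from $x_0^{lj} = d_l \epsilon^j$ has $q$-th component $y_q^{lj}(t) = d_l\, c_q \e^{At} \epsilon^j$. Subtracting the temporal means $\bar{x}^{km}$ and $\bar{y}^{lj}$ does not alter the resulting Gramian, exactly as in the proof of $\widehat{W}_C = W_C$ in \cite{morLalMG99}. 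Substituting yields $\Psi^{klmq}_{ij}(t) = c_k d_l\, (\e^{At} b_m)_i\, (c_q \e^{At} \epsilon^j)$, so the weights $\tfrac{1}{c_k d_l}$ cancel, the sums over $k$ and $l$ produce the factor $|S_u|\,|S_x|$ that cancels the prefactor, and bilinearity collapses the remaining double sum,
\begin{align*}
 \sum_{m=1}^M \sum_{q=1}^Q (\e^{At} b_m)_i\, (c_q \e^{At} \epsilon^j) = \big(\e^{At} \tilde{b}\big)_i\, \big(\tilde{c}\, \e^{At}\big)_j = \big(\e^{At}\, \tilde{b}\, \tilde{c}\, \e^{At}\big)_{ij}.
\end{align*}
This is exactly the integrand produced by the empirical cross Gramian of the SISO system $(A,\tilde{b},\tilde{c})$, so $\widehat{W}_Z = \int_0^\infty \e^{At}\, \tilde{b}\, \tilde{c}\, \e^{At} \D t = W_Z$; the argument is thus the one establishing $\widehat{W}_X = W_X$, lifted through the extra sum over output channels $q$ by linearity.

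I expect the main obstacle to be the rigorous treatment of the distributional input $u(t) = \delta(t)$ together with the centering about the temporal average — the technical core of the identity $\widehat{W}_C = W_C$ — which I would either re-derive from $x^{km}(0^+) = c_k b_m$ and the exponential decay of $\e^{At}$, or simply cite. The remaining, purely bookkeeping, point is to keep the normalization constants (the $\tfrac{1}{|S_u||S_x|M}$ prefactor of $\widehat{W}_Z$ and the $\tfrac{1}{c_k d_l}$ weights) consistent with those underlying the stated $\widehat{W}_X = W_X$ identity when specializing to the single input/output channel of the average system; once the trajectories above are in hand this is routine.
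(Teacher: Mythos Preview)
Your approach is correct and coincides with the paper's: the paper's proof is the single line ``This is a direct consequence of \cite[Lemma~3]{morHimO14},'' i.e., it too defers to the already established identity $\widehat{W}_X = W_X$ for the empirical cross Gramian, specialized to the average SISO system $(A,\tilde b,\tilde c)$. You simply unfold that reduction explicitly; the one place that is not quite ``routine bookkeeping'' is the leftover $\tfrac{1}{M}$ in the prefactor of $\widehat{W}_Z$ after the $|S_u|\,|S_x|$ cancellation, so when you write this up make sure your normalization matches the convention underlying the cited lemma.
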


\begin{proof}
This is a direct consequence of \cite[Lemma.~3]{morHimO14}.
\end{proof}

\subsection{Parameter-Space Empirical Gramians}
To transfer the idea of Gramian-based state-space reduction to parameter-space reduction,
the concepts of \textit{controllability} and \textit{observability} are extended to the parameter-space.
This leads to controllability-based parameter identification and observability-based parameter identification.

\subsubsection{Empirical Sensitivity Gramian}\label{def:ws}
Controllability-based parameter identification can be realized using an approach from \cite{morSunH06},
which treats the parameters as (additional) constant inputs.
The controllability Gramian for a linear system with linear parametrization (constant source or load) can be decomposed additively based on linear superposition:
\begin{align*}
 \dot{x}(t) &= Ax(t) + Bu(t) + F\theta = Ax(t) + \begin{pmatrix} B & F \end{pmatrix} \begin{pmatrix} u(t) \\ \theta \end{pmatrix} \\ &\Rightarrow W_C = W_C(A,B) + \sum_{i=1}^P W_{C,i}(A,F_{*i}) .
\end{align*}
Similar to \cite{StrFB06}, the trace of the parameter-controllability Gramians $W_{C,i}$ embodies a measure of (average) sensitivity,
and holds approximately for systems with nonlinear parametrization \cite{morHimO13}.
\begin{mydefine}[Empirical Sensitivity Gramian]
The \textbf{empirical sensitivity Gramian} is given by a diagonal matrix with entries corresponding to the traces of the parameter-controllability Gramians,
\begin{align*}
 W_{S,ii} := \tr(W_{C,i}).
\end{align*}
\end{mydefine}
The sum over all controllability Gramians can also be used for robust model reduction \cite{morSunH06a}.
Similarly, treating the parameters as inputs, the cross Gramian's trace can be utilized as a sensitivity measure \cite{LysHA16}.

\subsubsection{Empirical Identifiability Gramian}\label{def:wi}
For an observability-based parameter identification, the parameters are interpreted as additional states of the system \cite{SinH05,GefFSetal08}.
This approach leads to the \textbf{augmented system}, in which the system's state $x$ is appended by the parameter $\theta$ and,
since the parameters are (assumed) constant over time, the components of the vector-field associated to the \textit{parameter-states} are zero:
\begin{align}\label{eq:augsys}
\begin{split}
 \begin{pmatrix} \dot{x}(t) \\ \dot{\theta}(t) \end{pmatrix} &= \begin{pmatrix} f(t,x(t),u(t),\theta(t)) \\ 0 \end{pmatrix}, \\
                       y(t) &= g(t,x(t),u(t),\theta(t)), \\
 \begin{pmatrix} x(0) \\ \theta(0) \end{pmatrix} &= \begin{pmatrix} x_0 \\ \theta \end{pmatrix},
\end{split}
\end{align}
leaving the parameter-state's initial value for testing parameter perturbations.
The observability Gramian to this augmented system, the \textit{augmented observability Gramian},
has the block structure:
\begin{align*}
 \widehat{W}_O = \begin{pmatrix} W_O & W_M \\ W_M^\T & W_P \end{pmatrix},
\end{align*}
with the state-space observability Gramian $W_O$, the parameter-space observability Gramian $W_I$ and the mixed state and parameter block $W_M = W_M^\T$.
To isolate the parameter identifiability information, the state-space block is eliminated.
\begin{mydefine}[Empirical Identifiability Gramian]
The \textbf{empirical identifiability Gramian} is given by the Schur-complement of the empirical augmented observability Gramian for the lower right block:
\begin{align*}
 W_I = W_P - W_M^\T W_O^{-1} W_M.
\end{align*}
\end{mydefine}
Often, it is sufficient to approximate the empirical identifiability Gramian by the lower right block of the augmented observability Gramian $W_P$:
\begin{align*}
 W_I \approx W_P.
\end{align*}
Apart from the relation of identifiability Gramian to the Fischer information matrix \cite{SinH05}, 
also the connection of the (parameter) observability Gramian to the (parameter) Hessian matrix \cite{morLieFWetal13} is noted here.

\subsubsection{Empirical Cross-Identifiability Gramian}\label{def:wj}
If a system is square, the augmented system \eqref{eq:augsys} remains square and for linear systems also symmetry is preserved.
Hence, a cross Gramian of the augmented system is computable \cite{morHimO14}.
\begin{mydefine}[Empirical Joint Gramian]
The \textbf{empirical joint Gramian} is given by the empirical cross Gramian of the augmented system.
\end{mydefine}
The joint Gramian is an augmented cross Gramian and has a similar block structure as the augmented observability Gramian,
\begin{align*}
 W_J = \begin{pmatrix} W_X & W_m \\ 0 & 0 \end{pmatrix},
\end{align*}
but due to the uncontrollable parameter-states, the lower (parameter-related) blocks are identically zero.
Nonetheless, the observability-based parameter identifiability information can be extracted from the mixed block $W_m$.
\begin{mydefine}[Empirical Cross-Identifiability Gramian]
The \textbf{empirical cross-identifiability Gramian} is given by the Schur-complement of the symmetric part of the empirical joint Gramian for the lower right block:
\begin{align*}
 W_{\ddot{I}} = 0 - \frac{1}{2} W_m^\T (W_X + W_X^\T)^{-1} W_m.
\end{align*}
\end{mydefine}
Thus, the empirical joint Gramian enables the combined state and parameter analysis by the empirical cross Gramian $W_X$ and empirical cross-identifiability Gramian $W_{\ddot{I}}$ from a single $N \times N+P$ matrix.
Note, that the empirical joint Gramian may also be computed based on the non-symmetric cross Gramian.
Additionally, a parameter Gramian, such as $W_I$ or the $W_{\ddot{I}}$, could be balanced with the loadability Gramian from \cite{morTolA17}.

\subsection{Notes on Empirical Gramians}
It should be noted that empirical Gramians only yield workable results if the operating region of the system is restricted to a single steady-state.
If the trajectories used for the assembly of empirical Gramians are periodic or do not attain this steady-state, their performance is similar to time-limited balancing methods, see for example \cite{morJazST15} and references therein.
Overall, the quality of empirical-gramian-based methods depends largely on the quality of the measured or simulated (output) trajectory data.
Yet, due to the data-driven nature of the empirical Gramians, even unstable systems or systems with inhomogeneous initial conditions are admissible.


\section{Implementation Details}\label{sec:impl}
This section states concisely the theoretical, practical and technical design decisions in the implementation of the empirical Gramian framework - \emgr \cite{morHim18a}, as well as describing the unified and configurable approach to empirical Gramian computation.

\subsection{Design Principles} 
\emgr is realized using the high-level, interpreted Matlab programming language,
which is chosen due to its widespread use, long-term compatibility and mathematical expressiveness.
This enables first, a wide circulation due to compatibility with \textsc{Mathworks Matlab}\textsuperscript{\textregistered} \cite{matlabweb},
and second, the usage of the open-source variant\footnote{The \textsc{FreeMat} Matlab interpreter (Version: 4.2) \cite{freematweb} is not compatible.} \textsc{Gnu Octave} \cite{octaveweb}. 
Generally, the implementation of \emgr follows the procedural programming paradigm, includes various functional programming techniques and avoids object-oriented programming.

Since empirical Gramians are computable by mere basic linear algebra operations,
Matlab code can be evaluated efficiently by vectorization,
which transfers computationally intensive tasks as bulk operations to the BLAS (Basic Linear Algebra Subroutines) back-end.

Overall, \emgr is a reusable open-source toolbox,
and encompasses less than $500$ LoC (Lines of Code) in a single file and a cyclomatic complexity of $<100$ of the main function.
Apart from a Matlab interpreter, \emgr has \textbf{no} further dependencies, such as on other toolboxes.
The source code is engineered with regard to the best practice guides \cite{Joh11} (coding style) and \cite{Alt15} (performance).

Furthermore, two variants of \emgr are maintained:
First, \texttt{emgr$\_$oct}\footnote{See \url{http://gramian.de/emgr_oct.m}}, uses \textsc{Octave}-specific language extension: \textit{default arguments} and \textit{assignment operations},
second, \texttt{emgr$\_$lgc}\footnote{See \url{http://gramian.de/emgr_lgc.m}}, enables compatibility to \textsc{Matlab} versions before 2016b not supporting \textit{implicit expansion}, also known as \textit{automatic broadcasting}.

\subsection{Parallelization}
Apart from vectorization allowing the implicit use of SIMD (single-instruction-multiple-data) functionality for vectorized block operations,
also multi-core parallelization is used to maximize use of available compute resources.

\subsubsection{Shared Memory Parallelization} 
For shared memory systems with uniform memory access (UMA), two types of parallelization are utilized.
First, an \textbf{implicit parallelization} may be applied by the interpreter for an additional acceleration of block operations.
Second, \textbf{explicit parallelization} is available for the computation of different state and output trajectories,
using parallel for-loops \texttt{parfor}\footnote{See \url{http://mathworks.com/help/matlab/ref/parfor.html} },
but deactivated by default to guarantee replicable results, as the use of \texttt{parfor} does not guarantee a unique order of execution.

\subsubsection{Heterogeneous Parallelization}\label{sec:gpgpu} 
The actual empirical Gramians result from $N^2$ inner products.
In case of the default Euclidean inner product, this amounts to a dense matrix-matrix-multiplication\footnote{Implemented as GEMM (Generalized Matrix Multiplication $R = AB+\alpha C$) by BLAS.},
which can be efficiently computed by GPGPUs (General Purpose Graphics Processing Units).
In case an integrated GPU with zero-copy shared memory architecture, such as UMM (uniform memory model) or hUMA (heterogeneous unified memory access) \cite{RogMM13}, is used, 
the assembly of the Gramian matrices can be performed with practically no overhead, since the trajectories,
which are usually computed and stored in CPU memory space,
do not need to be copied between CPU and GPU memory spaces.
The GPU can directly operate on the shared memory.

\subsubsection{Distributed Memory Parallelization}\label{def:dwx}
A disadvantage of empirical Gramian matrices is the quadratically growing memory requirements with respect to the state-space (and parameter-space) dimension, since for an $N$ dimensional system, a (dense) empirical Gramian of dimension $N \times N$ is computed.
To combat this shortcoming, a specific property of the empirical cross Gramian can be exploited:
The columns of the empirical cross Gramian, and thus the empirical joint Gramian, may be computed separately,
\begin{align*}
  \widehat{W}_X &= [\widehat{w}_X^1, \dots, \widehat{w}_X^N], \\
  \widehat{w}_X^j &= \frac{1}{|S_u||S_x| M} \sum_{k=1}^{|S_u|} \sum_{l=1}^{|S_x|} \sum_{m=1}^M \frac{1}{c_k d_l} \int_0^\infty \psi^{klmj}(t) \D t \in \R^{N \times 1}, \\
\psi^{klmj}_i &= (x_i^{km}(t) - \bar{x}_i^{km})(y_m^{lj}(t) - \bar{y}_m^{lj}) \in \R,
\end{align*}
hence this \textbf{distributed empirical cross Gramian} \cite[Sec.~4.2]{morHimLR16} is computable in parallel and communication-free on a distributed memory computer system,
or sequentially in a memory-economical manner as a \textbf{low-rank empirical cross Gramian} \cite{morHimLRetal17} on a unified memory computer system.
This column-wise computability translates also to the empirical joint Gramian and the non-symmetric variants of the empirical cross and joint Gramian. 

Based on this partitioning, a low-rank representation can be obtained in a memory-bound or compute-bound setting
together with the HAPOD (hierarchical approximate proper orthogonal decomposition) \cite{morHimLR16}.
This POD variant allows to directly compute a Galerkin projection from an arbitrary column-wise partitioning of the empirical cross Gramian.


\section{Interface}\label{sec:intf} 
\emgr provides a uniform function call for the computation of all empirical Gramian types.
The subsequent signature documentation is based on \cite{morHim17} and \url{http://gramian.de} \footnote{The current instance of \url{http://gramian.de} is preserved at \url{http://archive.is/bOBpW}.}.
Minimally, the \emgr function requires five mandatory arguments (single letter):
\begin{center}
 \texttt{emgr(f,g,s,t,w)}
\end{center}
additionally eight optional arguments (double letter) allow a usage by:
\begin{center}
 \texttt{emgr(f,g,s,t,w,pr,nf,ut,us,xs,um,xm,dp)}
\end{center}
furthermore, a single argument variant may also be used,
\begin{center}
 \texttt{emgr(\textquotesingle version\textquotesingle)}
\end{center}
which returns the current version number.

\subsection{Mandatory Arguments} 
For the minimal usage, the following five arguments are required:

\begin{itemize}

 \item[\texttt{f}] handle to a function with the signature \texttt{xdot = f(x,u,p,t)} representing the system's vector-field and expecting the arguments: current state \texttt{x}, current input \texttt{u}, (current) parameter \texttt{p} and current time \texttt{t}. \vskip1ex

 \item[\texttt{g}] handle to a function with the signature \texttt{y = g(x,u,p,t)} representing the system's output functional and expecting the arguments: current state \texttt{x}, current input \texttt{u}, (current) parameter \texttt{p} and current time \texttt{t}.

If \texttt{g~=~1}, the identity output functional $g(t,x(t),u(t),\theta) = x(t)$ is assumed. \vskip1ex

 \item[\texttt{s}] three component vector \texttt{s = [M,N,Q]} setting the dimensions of the input $M~:=~\dim(u(t))$, state $N~:=~\dim(x(t))$ and output $Q~:=~\dim(y(t))$. \vskip1ex

 \item[\texttt{t}] two component vector \texttt{t = [h,T]} specifying the time-step width $h$ and time horizon $T$. \vskip1ex

 \item[\texttt{w}] character selecting the empirical Gramian type; for details see \cref{sec:features}.

\end{itemize}

\subsection{Features}\label{sec:features} 
The admissible characters to select the empirical Gramian type are as follows:

\begin{itemize}

 \item[\texttt{\textquotesingle c\textquotesingle}] Empirical controllability Gramian (see \cref{def:wc}), \\ \emgr returns a matrix:
  \begin{itemize}
   \item[] $N \times N$ empirical controllability Gramian matrix $W_C$. 
  \end{itemize}

 \item[\texttt{\textquotesingle o\textquotesingle}] Empirical observability Gramian (see \cref{def:wo}), \\ \emgr returns a matrix:
  \begin{itemize}
   \item[] $N \times N$ empirical observability Gramian matrix $W_O$.
  \end{itemize}

\pagebreak

 \item[\texttt{\textquotesingle x\textquotesingle}] Empirical cross Gramian (see \cref{def:wx}), \\ \emgr returns a matrix:
  \begin{itemize}
   \item[] $N \times N$ empirical cross Gramian matrix $W_X$.
  \end{itemize}

 \item[\texttt{\textquotesingle y\textquotesingle}] Empirical linear cross Gramian (see \cref{def:wy}), \\ \emgr returns a matrix:
  \begin{itemize}
   \item[] $N \times N$ empirical linear cross Gramian matrix $W_Y$.
  \end{itemize}

 \item[\texttt{\textquotesingle s\textquotesingle}] Empirical sensitivity Gramian (see \cref{def:ws}), \\ \emgr returns a cell array\footnote{A cell array is a generic container (array) in the Matlab language.} holding:
  \begin{itemize}
   \item[] $N \times N$ empirical controllability Gramian matrix $W_C$,
   \item[] $P \times 1$ empirical sensitivity Gramian diagonal $W_S$.
  \end{itemize}

 \item[\texttt{\textquotesingle i\textquotesingle}] Empirical identifiability Gramian (see \cref{def:wi}), \\ \emgr returns a cell array holding:
  \begin{itemize}
   \item[] $N \times N$ empirical observability Gramian matrix $W_O$,
   \item[] $P \times P$ empirical identifiability Gramian matrix $W_I$.
  \end{itemize}

 \item[\texttt{\textquotesingle j\textquotesingle}] Empirical joint Gramian (see \cref{def:wj}), \\ \emgr returns a cell array holding:
  \begin{itemize}
   \item[] $N \times N$ empirical cross Gramian matrix $W_X$,
   \item[] $P \times P$ empirical cross-identifiability Gramian matrix $W_{\ddot{I}}$.
  \end{itemize}

\end{itemize}

\subsubsection{Non-Symmetric Cross Gramian} 
The non-symmetric cross Gramian \cite{morHimO16} (see \cref{def:wz}) is a special variant of the cross Gramian for non-square and non-symmetric MIMO systems, which reduces to the regular cross Gramian for SISO systems.
Since the computation is similar to the empirical cross Gramian and a non-symmetric variant of the empirical joint Gramian shall be computable too,
instead of a Gramian type selected through the argument \texttt{w}, it is selectable via an option flag:
Non-symmetric variants may be computed for the empirical cross Gramian (\texttt{w~=~\textquotesingle x\textquotesingle}),
empirical linear cross Gramian (\texttt{w~=~\textquotesingle y\textquotesingle}) or
empirical joint Gramian (\texttt{w~=~\textquotesingle j\textquotesingle}) by activating the flag \texttt{nf(7)~=~1}.

\subsubsection{Parametric Systems} 
Parametric model order reduction is accomplished by averaging an empirical Gramian over a discretized parameter-space \cite{morHimO15a}.
To this end the parameter sampling points, arranged as columns of a matrix, can be supplied via the optional argument \texttt{pr}.

\subsubsection{Time-Varying Systems}
Since empirical Gramians are purely based on trajectory data,
they are also computable for time varying systems as described in \cite{morConI04}.
The empirical Gramian framework can compute averaged Gramians for time varying systems \cite{morNilR09a}, which are time independent matrices.

\subsection{Optional Arguments} 
The eight optional arguments allow a detailed definition of the operating region and configuration of the computation.


\begin{itemize}

 \item[\texttt{pr}] system parameters (Default value: \texttt{0})
  \begin{itemize}
   \item[\textit{vector}] a column vector holding the parameter components,
   \item[\textit{matrix}] a set of parameters, each column holding one parameter.
  \end{itemize}

 \item[\texttt{nf}] twelve component vector encoding the option flags,
                    for details see \cref{sec:flags}.

 \item[\texttt{ut}] input function (Default value: \texttt{1})
  \begin{itemize}
   \item[\textit{handle}] function handle expecting a signature \texttt{u$\_$t = u(t)},
   \item[$0$~~~~]         pseudo-random binary input,
   \item[$1$~~~~]         delta impulse input,
   \item[$\infty$~~~]     decreasing frequency exponential chirp.
  \end{itemize}

 \item[\texttt{us}] steady-state input (Default value: \texttt{0})
  \begin{itemize}
   \item[\textit{scalar}] set all $M$ steady-state input components to argument,
   \item[\textit{vector}] set steady-state input to argument of expected dimension $M \times 1$.
  \end{itemize}

 \item[\texttt{xs}] steady-state (Default value: \texttt{0})
  \begin{itemize}
   \item[\textit{scalar}] set all $N$ steady-state components to argument,
   \item[\textit{vector}] set steady-state to argument of expected dimension $N \times 1$.
  \end{itemize}

 \item[\texttt{um}] input scales (Default value: \texttt{1})
  \begin{itemize}
   \item[\textit{scalar}] set all $M$ maximum input scales to argument,
   \item[\textit{vector}] set maximum input scales to argument of expected dimension $M \times 1$,
   \item[\textit{matrix}] set scales to argument with $M$ rows; used as is.
  \end{itemize}

 \item[\texttt{xm}] initial state scales (Default value: \texttt{1})
  \begin{itemize}
   \item[\textit{scalar}] set all $N$ maximum initial state scales to argument,
   \item[\textit{vector}] \mbox{set maximum steady-state scales to argument of expected dimension $N \times 1$},
   \item[\textit{matrix}] set scales to argument with $N$ rows; used as is.
  \end{itemize}

 \item[\texttt{dp}] inner product interface via a handle to a function with the signature \mbox{\texttt{z = dp(x,y)}} defining the dot product for the Gramian matrix computation (Default value: \texttt{[]}).

\end{itemize}

\subsubsection{Inner Product Interface}
The empirical Gramian matrices are computed by inner products of trajectory data.
A custom inner products for the assembly of the empirical Gramians matrix,
can be set by the argument \texttt{dp}, which expects a handle to a function with the signature: 
\begin{center}
 \texttt{z = dp(x,y)}
\end{center}


and the arguments:
\begin{itemize}

 \item[\texttt{x}] matrix of dimension $N \times \frac{T}{h}$,

 \item[\texttt{y}] matrix of dimension $\frac{T}{h} \times n$ for $n \leq N$.

\end{itemize}
The return value $z$ is typically an $N \times n$ matrix,
but scalar or vector-valued $z$ are admissible, too.
By default, the Euclidean inner product, the standard matrix multiplication, is used:
\begin{center}
 \texttt{dp = @(x,y) mtimes(x,y)}.
\end{center}
Other choices are for example:
covariance-weighted products for Gaussian-noise-driven systems yielding system covariances \cite{NikG69},
reproducing kernel Hilbert spaces (RKHS) \cite{morBouH17}, such as the polynomial, Gaussian or Sigmoid kernels \cite{FasM15},
or energy-stable inner products \cite{morKalBAetal14}.
Also, weighted Gramians \cite{morChoSB16} and time-weighted system Gramians \cite{morSchD95} can be computed using this interface, i.e.:
\begin{center}
 \texttt{dp = @(x,y) mtimes([0:h:T].\^{}k.*x,y)}
\end{center}
for a monomial of order $k$ time-domain weighted inner product.
Furthermore, the inner product interface may be used to directly compute the trace of an empirical Gramian by using a pseudo-kernel: 
\begin{center}
 \texttt{dp = @(x,y) sum(sum(x.*y\textquotesingle))}
\end{center}
which exploits a property for computing the trace of a matrix product \linebreak \mbox{$\tr(AB) = \sum_i \sum_j A_{ij} B_{ji}$}.
This interface may also be used to compute only the empirical Gramian's diagonal:
\begin{center}
 \texttt{dp = @(x,y) sum(x.*y\textquotesingle,2)}
\end{center}
for input-output importance \cite{morSnoVT17} or input-output coherence \cite[Ch.~13]{morFer82}.
Lastly, it is noted that offloading matrix multiplications to an accelerator such as a GPGPU, motivated in \cref{sec:gpgpu}, can also be achieved using this interface.

\subsection{Option Flags}\label{sec:flags}
The vector \texttt{nf} contains twelve components,
each representing an option with the default value zero and the following functionality:

\pagebreak

\begin{itemize}

 \item[\texttt{nf(1)}] Time series centering:
  \begin{itemize}
   \item[\texttt{=~0}] No centering,
   \item[\texttt{=~1}] Steady-state (for empirical covariance matrices),
   \item[\texttt{=~2}] Final state,
   \item[\texttt{=~3}] Arithmetic average over time (for empirical Gramians),
   \item[\texttt{=~4}] Root-mean-square over time,
   \item[\texttt{=~5}] Mid-range over time.
  \end{itemize}

 \item[\texttt{nf(2)}] Input scale sequence:
  \begin{itemize}
   \item[\texttt{=~0}] Single scale: \texttt{um $\leftarrow$ um},
   \item[\texttt{=~1}] Linear scale subdivision: \texttt{um $\leftarrow$ um * [0.25, 0.5, 0.75, 1.0]},
   \item[\texttt{=~2}] Geometric scale subdivision: \texttt{um $\leftarrow$ um * [0.125, 0.25, 0.5, 1.0]},
   \item[\texttt{=~3}] Logarithmic scale subdivision: \texttt{um $\leftarrow$ um * [0.001, 0.01, 0.1, 1.0]},
   \item[\texttt{=~4}] Sparse scale subdivision: \texttt{um $\leftarrow$ um * [0.01, 0.5, 0.99, 1.0]}.
  \end{itemize}


 \item[\texttt{nf(3)}] Initial state scale sequence:
  \begin{itemize}
   \item[\texttt{=~0}] Single scale: \texttt{xm $\leftarrow$ xm},
   \item[\texttt{=~1}] Linear scale subdivision: \texttt{xm $\leftarrow$ xm * [0.25, 0.5, 0.75, 1.0]},
   \item[\texttt{=~2}] Geometric scale subdivision: \texttt{xm $\leftarrow$ xm * [0.125, 0.25, 0.5, 1.0]},
   \item[\texttt{=~3}] Logarithmic scale subdivision: \texttt{xm $\leftarrow$ xm * [0.001, 0.01, 0.1, 1.0]},
   \item[\texttt{=~4}] Sparse scale subdivision: \texttt{xm $\leftarrow$ xm * [0.01, 0.5, 0.99, 1.0]}.
  \end{itemize}

 \item[\texttt{nf(4)}] Input directions:
  \begin{itemize}
   \item[\texttt{=~0}] Positive and negative: \texttt{um $\leftarrow$ [-um, um]},
   \item[\texttt{=~1}] Only positive: \texttt{um $\leftarrow$ um}.
  \end{itemize}

 \item[\texttt{nf(5)}] Initial state directions:
  \begin{itemize}
   \item[\texttt{=~0}] Positive and negative: \texttt{xm $\leftarrow$ [-xm, xm]},
   \item[\texttt{=~1}] Only positive: \texttt{xm $\leftarrow$ xm}.
  \end{itemize}

 \item[\texttt{nf(6)}] Normalizing:
  \begin{itemize}
   \item[\texttt{=~0}] No normalization,
   \item[\texttt{=~1}] Scale with Gramian diagonal (see \cite{EbeA12}),
   \item[\texttt{=~2}] Scale with steady-state (see \cite{morSunH06b}).
  \end{itemize}

 \item[\texttt{nf(7)}] Non-Symmetric Cross Gramian, only $W_X$, $W_Y$, $W_J$: 
  \begin{itemize}
   \item[\texttt{=~0}] Regular cross Gramian,
   \item[\texttt{=~1}] Non-symmetric cross Gramian.
  \end{itemize}

 \item[\texttt{nf(8)}] Extra input for state and parameter perturbation trajectories, only \mbox{$W_O$, $W_X$, $W_S$, $W_I$, $W_J$}:
  \begin{itemize}
   \item[\texttt{=~0}] No extra input,
   \item[\texttt{=~1}] Apply extra input (see \cite{PowM15}).
  \end{itemize}

\pagebreak

 \item[\texttt{nf(9)}] Center parameter scales, only $W_S$, $W_I$, $W_J$:
  \begin{itemize}
   \item[\texttt{=~0}] No centering,
   \item[\texttt{=~1}] Center around arithmetic mean,
   \item[\texttt{=~2}] Center around logarithmic mean.
  \end{itemize}


 \item[\texttt{nf(10)}] Parameter Gramian variant, only $W_S$, $W_I$, $W_J$:
  \begin{itemize}
   \item[\texttt{=~0}] Average input-to-state ($W_S$), detailed Schur-complement ($W_I$, $W_J$),
   \item[\texttt{=~1}] Average input-to-output ($W_S$), approximate Schur-complement \mbox{($W_I$, $W_J$).}
  \end{itemize}

 \item[\texttt{nf(11)}] Empirical cross Gramian partition width, only $W_X$, $W_J$: 
  \begin{itemize}
   \item[\texttt{=~0}] Full cross Gramian computation, no partitioning.
   \item[\texttt{<~N}] Maximum partition size in terms of cross Gramian columns.
  \end{itemize}

 \item[\texttt{nf(12)}] Partitioned empirical cross Gramian running index, only $W_X$, $W_J$:
  \begin{itemize}
   \item[\texttt{=~0}] No partitioning.
   \item[\texttt{>~0}] Index of the set of cross Gramian columns to be computed.
  \end{itemize}

\end{itemize}

\subsubsection{Schur-Complement} 
The observability-based parameter Gramians, the empirical identifiability Gramian $W_I$ and the empirical cross-identifiability Gramian $W_{\ddot{I}}$,
utilize an inversion as part of a (approximated) Schur-complement.
Instead of using a Schur-complement solver or the pseudo-inverse, an approximate inverse with computational complexity $\mathcal{O}(N^2)$ based on \cite{WuYVetal13} is utilized,
\begin{align*}
 A^{-1} \approx D^{-1} - D^{-1} E D^{-1},
\end{align*}
with the diagonal matrix $D$, $D_{ii} = A_{ii}$ and the matrix of off-diagonal elements $E = A - D$.
This approximate inverse is used by default for $W_I$ and $W_{\ddot{I}}$.

\subsubsection{Partitioned Computation}
The partitioned empirical cross Gramian (\cref{def:dwx}) can be configured by the option flags \texttt{nf(11)} and \texttt{nf(12)},
with \texttt{nf(11)} defining the maximum number of columns per partition, and \texttt{nf(12)} setting the running index of the current partition.  
Together with a partitioned singular value decomposition, such as the HAPOD \cite{morHimLR16},
an empirical-cross-Gramian-based Galerkin projection is computable with minimal communication parallely on a distributed memory system, or sequentially on a shared memory system \cite{morHimLRetal17}.

\subsection{Solver Configuration}\label{sec:rk}
To provide a problem-specific integrator to generate the state and output trajectories,
a global variable named \texttt{ODE} is available, and expects a handle to a function with the signature:
\begin{center}
 \texttt{y = ODE(f,g,t,x0,u,p)}
\end{center}
comprising the arguments:
\begin{itemize}

 \item[\texttt{f}] handle to a function with the signature \texttt{xdot = f(x,u,p,t)} representing the system's vector-field and expecting the arguments: current state \texttt{x}, current input \texttt{u}, (current) parameter \texttt{p} and current time \texttt{t}. \vskip1ex

 \item[\texttt{g}] handle to a function with the signature \texttt{y = g(x,u,p,t)} representing the system's output functional and expecting the arguments: the current state \texttt{x}, current input \texttt{u}, (current) parameter \texttt{p} and current time \texttt{t}. \vskip1ex

 \item[\texttt{t}] two component vector \texttt{t = [h,T]} specifying the time-step width $h$ and time horizon $T$. \vskip1ex

 \item[\texttt{x0}] column vector of dimension $N$ setting the initial condition. \vskip1ex

 \item[\texttt{u}] handle to a function with the signature \texttt{u$\_$t = u(t)}. \vskip1ex

 \item[\texttt{p}] column vector of dimension $P$ holding the (current) parameter.

\end{itemize}
The solver is expected to return a discrete trajectory matrix of dimension \mbox{$\dim(g(x(t),u(t),\theta,t)) \times \frac{T}{h}$}.

As a default solver for (nonlinear) initial value problems,
the optimal second-order strong stability preserving (SSP) explicit Runge-Kutta method \cite{Ket08} is included in \emgr.
This single-step integrator is implemented in a low-storage variant, 
and the stability of this method can be increased by additional stages, which is configurable by a global variable named \texttt{STAGES}.
The default number of stages is \mbox{\texttt{STAGES} $= 3$}, yielding the SSP32 method.

\subsection{Sample Usage}
To illustrate the usage of \emgr,
the Matlab code for the computation of the empirical cross Gramian of a small linear system is presented in \cref{fig:code}.
For demonstration purposes, this system has one uncontrollable and unobservable, one controllable and unobservable, one uncontrollable and observable, and one controllable and observable state:
\begin{align*}
 A := -\frac{1}{2} \1, \quad B := \begin{pmatrix} 0 & 1 & 0 & 1 \end{pmatrix}^\T, \quad C := \begin{pmatrix} 0 & 0 & 1 & 1 \end{pmatrix}.
\end{align*}
The cross Gramian computes as:
\begin{align*}
 A W_X + W_X A = BC \Rightarrow W_X = BC,
\end{align*}
which is approximately computed empirically following \cref{def:wx} in \cref{fig:code}.

\begin{figure}[h!]
\begin{minipage}{\textwidth}\texttt{%
M = 1; ~~~~~~~~~~~~~~~~~~~~~~~~~~~\% Number of inputs \\
N = 4; ~~~~~~~~~~~~~~~~~~~~~~~~~~~\% Number of states \\
Q = 1; ~~~~~~~~~~~~~~~~~~~~~~~~~~~\% Number of outputs \\
A = -0.5*eye(N); ~~~~~~~~~~~~~~~~~\% System matrix \\
B = [0;1;0;1]; ~~~~~~~~~~~~~~~~~~~\% Input matrix \\
C = [0,0,1,1]; ~~~~~~~~~~~~~~~~~~~\% Output matrix \\
f = @(x,u,p,t) A*x + B*u; ~~~~~~~~\% Vector field \\
g = @(x,u,p,t) C*x; ~~~~~~~~~~~~~~\% Output functional \\
h = 0.1; ~~~~~~~~~~~~~~~~~~~~~~~~~\% Time step size \\
T = 10.0; ~~~~~~~~~~~~~~~~~~~~~~~~\% Time horizon \\
Wx = emgr(f,g,[M,N,Q],[h,T],\textquotesingle x\textquotesingle); \% $\approx$ B*C
}
\end{minipage}
\caption{Sample code for the computation of the empirical cross Gramian of a non-minimal fourth order system.}
\label{fig:code}
\end{figure}


\section{Numerical Examples}\label{sec:numex}
In this section empirical-Gramian-based model reduction techniques are demonstrated for three test systems using \cite{octaveweb};
first, for a linear state-space symmetric MIMO system, 
second, for a hyperbolic SISO system,
and third for a nonlinear SIMO system.
All numerical tests are performed using \textsc{Octave~4.4} \cite{octave}.

\subsection{Linear Verification}\label{sc:numex1}
The first example is a linear test system of the form \eqref{eq:linsys} and generated using the inverse Lyapunov procedure \cite{morWiki},
in a variant that enforces state-space symmetric systems \cite{morHimO17}.
For state-space symmetric systems, $A = A^\T$, $B = C^\T$, all system Gramians are symmetric and equal \cite{morLiuST98}.
The system is configured to have $N = \dim(x(t)) = 256$ states and $\dim(u(t)) = \dim(y(t)) = 4$ inputs and outputs.
For the computation of the reduced order model, the empirical linear cross Gramian with an impulse input $u_i(t) = \delta(t)$ and a zero initial state $x_{0,i} = 0$ is used,
while for the trajectory simulation the default SSP32 (\cref{sec:rk}) integrator is utilized.

To quantify the quality of the resulting reduced order models,
the error between the original full order model output and the reduced order model's output is compared in the time-domain $\lnorm_2$-norm,
\begin{align*}
 \|y-\tilde{y}\|_{\lnorm_2} = \sqrt{\int_0^\infty \|y(t)-\tilde{y}(t)\|_2^2 \D t}.
\end{align*}
Also, the balanced truncation upper bound is assessed \cite{morAnt05}:
\begin{align*}
 \|y-\tilde{y}\|_{\lnorm_2} \leq 2 \|u\|_{\lnorm_2} \sum_{k=n+1}^N \sigma_k,
\end{align*}
for a reduced model of order $n$.
Instead of impulse input, zero-centered, unit-variance Gaussian noise is used as input time series for the evaluation.

\begin{figure}[t]\centering
 \includegraphics[width=.95\textwidth]{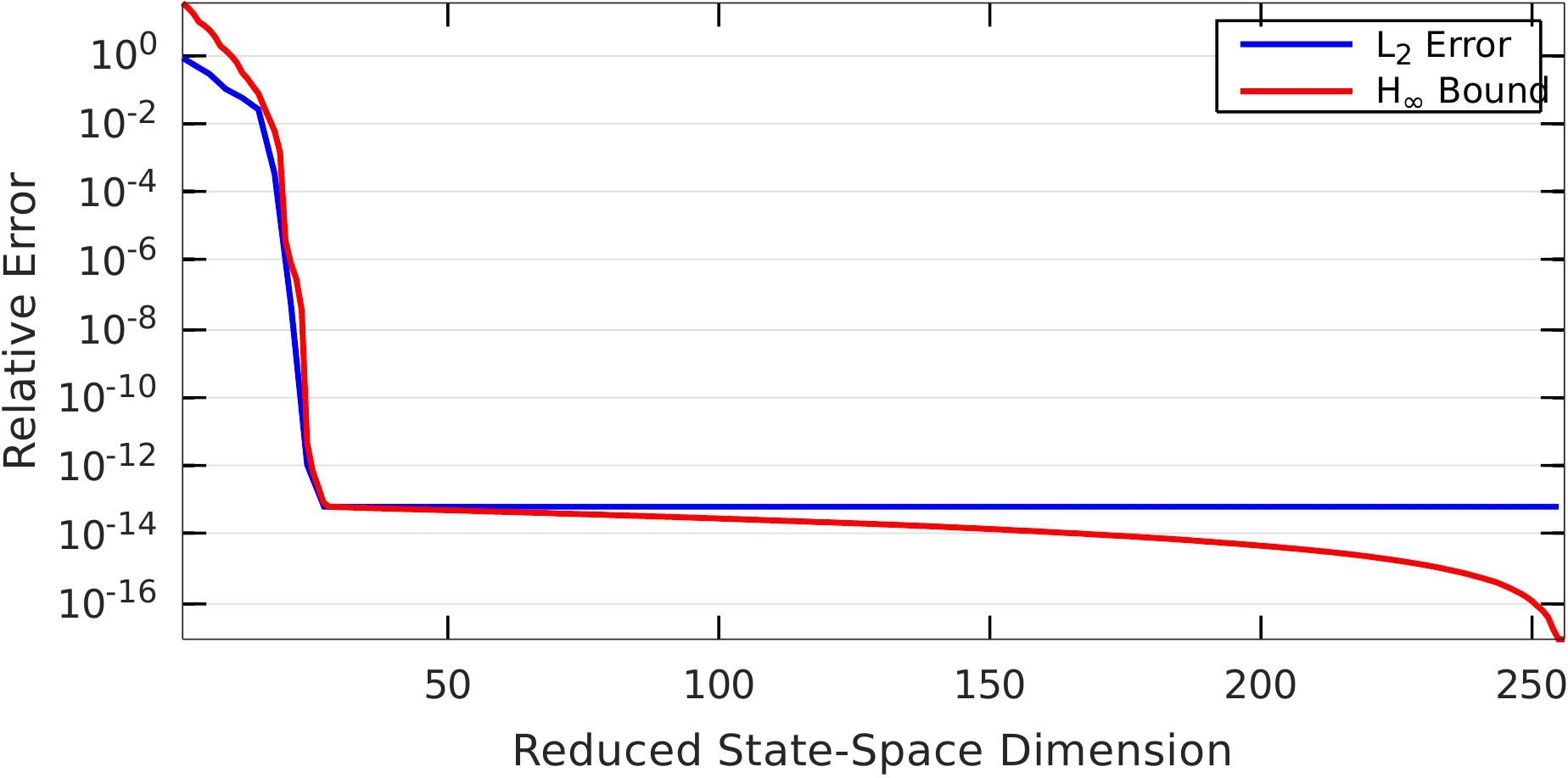}
 \caption{Model reduction error for the linear test system, see \cref{sc:numex1}.}
 \label{fig:lin}
\end{figure}

\cref{fig:lin} shows the reduced order model's relative $\lnorm_2$-norm model reduction error,
as well as the upper bound for increasing reduced orders.
The error evolves correctly tightly below the bound,
until the numerical accuracy\footnote{Double-precision floating point arithmetic.} is reached.

\subsection{Hyperbolic Evaluation}\label{sc:numex2}
The second numerical example is given by a one-dimensional transport equation.
An input-output system is constructed by selecting the left boundary as the input and the right boundary as the output:
\begin{align*}
 \frac{\partial}{\partial t} z(x,t) &= -\theta \frac{\partial}{\partial x}, \quad x \in [0,1], \\
 z(0,t) &= u(t), \\
 z(x,0) &= 0, \\
 y(t) &= z(1,t),
\end{align*}
while the transport velocity $\theta \in [1.0, 1.5]$ is treated as a parameter.
This partial differential equation system is spatially discretized by a first-order finite-difference upwind scheme,
yielding a SISO ordinary differential equation system:
\begin{align*}
 \dot{x}(t) &= A(\theta) x(t) + b u(t), \\
       y(t) &= c x(t),
\end{align*}
with $A(\theta) = \theta A$.
For this example, a spatial resolution of $\dim(x(t)) = 256$ is chosen,
hence \mbox{$A \in \R^{256 \times 256}$}, $b \in \R^{256}$ and $c \in \R^{256}$.
Since the system matrix is non-normal, using techniques such as POD may lead to unstable reduced order models.
Thus, here a cross-Gramian-based balancing technique is used guaranteeing stability of the reduced model.
For training, impulse responses for the extremal velocities are used;
for testing, a Gauss bell input is utilized over ten uniformly random velocities, both utilizing the default integrator.
The reduced order model quality is evaluated by the parametric norm \cite{morHim17}:
\begin{align*}
 \|y(\theta) - \tilde{y}(\theta)\|_{\lnorm_2 \otimes \lnorm_2} &= \sqrt{\int_\Theta \|y(\theta)-\tilde{y}(\theta_r)\|_{\lnorm_2}^2 \D \theta}.
\end{align*}

\begin{figure}[t]\centering
 \includegraphics[width=.95\textwidth]{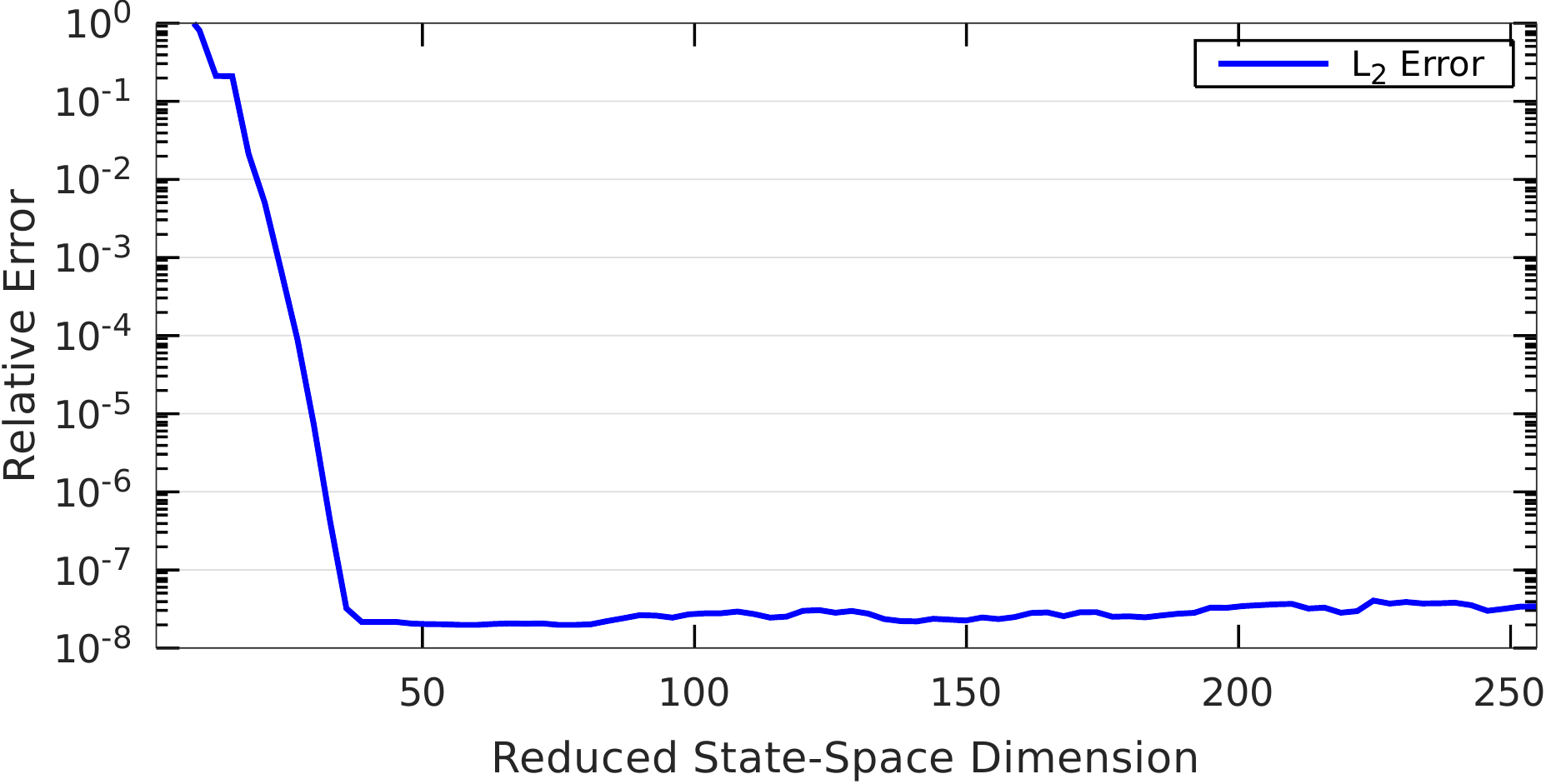}
 \caption{Model reduction error for the hyperbolic test system, see \cref{sc:numex2}.}
 \label{fig:hyp}
\end{figure}

Even though the system is hyperbolic, a steep decay in error is obtained.
Yet, due to the hyperbolicity and the parameter-dependence of the reduced order model, a lower overall numerical accuracy ($\approx 10^{-7}$) is achieved.

\subsection{Nonlinear Validation}\label{sc:numex3}
The third example involves a parametric nonlinear system,
based on the hyperbolic network model \cite{QuaZC01},
\begin{align*}
 \dot{x}(t) &= A \tanh(K(\theta)x(t)) + B u(t), \\
       y(t) &= C x(t).
\end{align*}
The structure of this system is similar to the linear system model \eqref{eq:linsys},
yet the vector-field includes a hyperbolic tangent nonlinearity,
in which the parametrized activation is described by a diagonal gain matrix $K(\theta)$, $K_{ii} = \theta_i$.
A negative Lehmer matrix\footnote{A Lehmer matrix is defined as $A_{ij} := \min(i,j) / \max(i,j)$, and is positive definite.} is selected as system matrix $A \in \R^{256 \times 256}$,
a vector of sequential cosine evaluations as input matrix $B \in \R^{256 \times 1}$, a binary matrix $C^{4 \times 256}$ as output matrix,
and parameters $\theta \in \R^{256}$ constrained to the interval $\theta_i \in [\frac{1}{2},1]$.
For this system a combined state and parameter reduction is demonstrated.
To this end an empirical non-symmetric joint Gramian is computed,
using again an impulse input $u(t) = \delta(t)$, a zero initial state $x_{0,i} = 0$ and the default integrator.
The reduced order model quality is evaluated for the same input and initial state by the joint state and parameter norm $\|y(\theta) - \tilde{y}(\theta_r)\|_{\lnorm_2 \otimes \lnorm_2}$, with respect to the reduced parameters for ten uniformly random samples from the admissible parameter-space.

\begin{figure}[t]\centering
 \includegraphics[width=.95\textwidth]{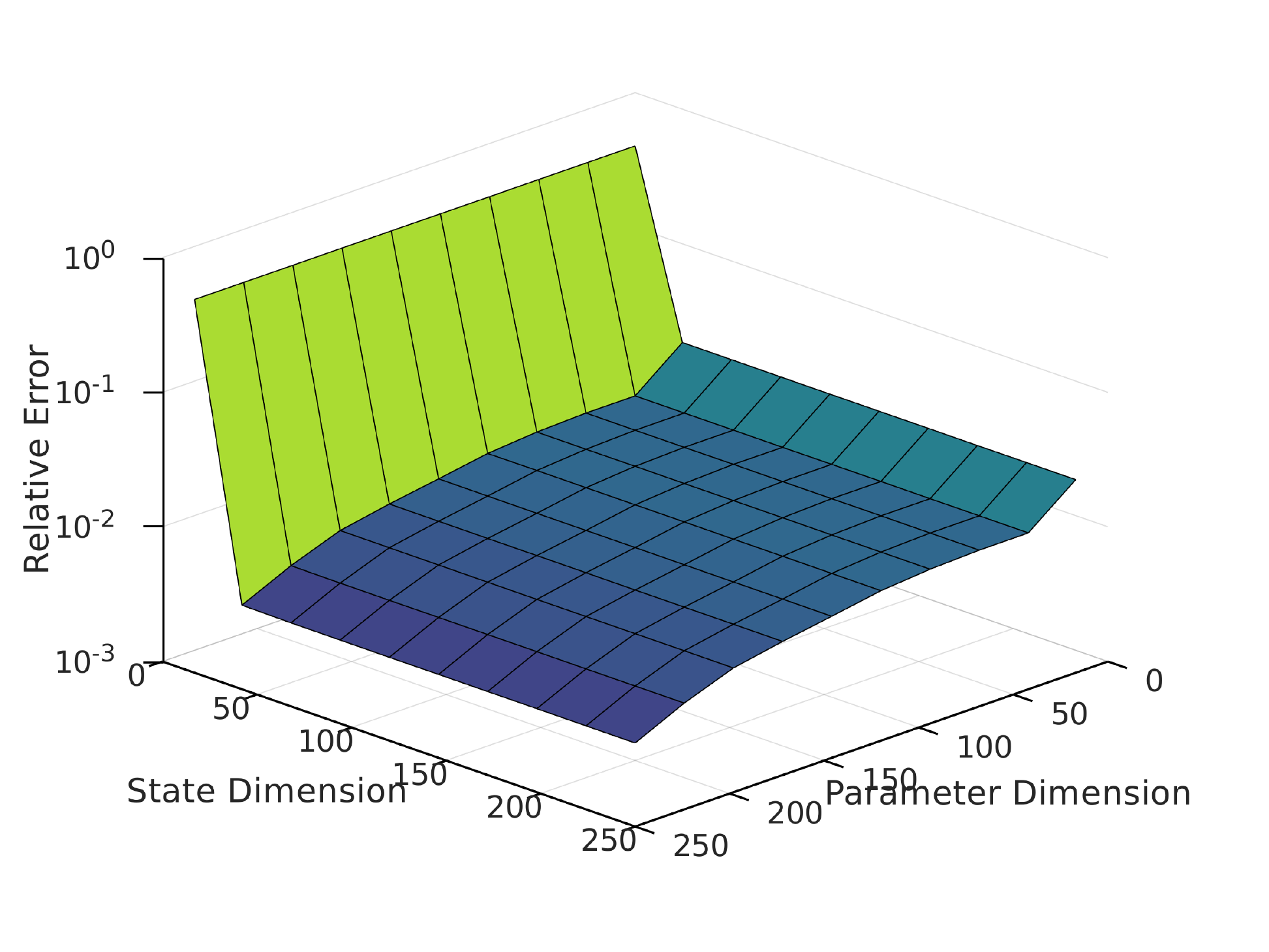}
 \caption{Model reduction error for the nonlinear test system, see \cref{sc:numex3}.}
 \label{fig:non}
\end{figure}

\cref{fig:non} depicts the $\lnorm_2 \otimes \lnorm_2$-norm model reduction error for increasing state- and parameter-space dimensions.
The combined reduction errors decays for both, reduced state-space and reduced parameter-space, yet faster for the state-space.
As for the parametric model, the numerical accuracy is reduced due to the combined reduction.


\section{Concluding Remark}\label{sec:sum}
Empirical Gramians are a universal tool for nonlinear system and control theoretic applications with a simple, data-driven construction.
The empirical Gramian framework - \emgr implements empirical Gramian computation for system input-output coherence and parameter identifiability evaluation.
Possible future extensions of \emgr may include Koopman Gramians \cite{morYeuLH17}, empirical Riccati covariance matrices \cite{morChoS17}, or empirical differential balancing \cite{morKawS17}.
Finally, further examples and applications can be found at the \emgr project website: \mbox{\url{http://gramian.de}}.

\section*{Code Availability}
The source code of the presented numerical examples can be obtained from:
\begin{center}
\url{http://runmycode.org/companion/view/2077}
\end{center}
and is authored by: \textsc{Christian Himpe}.

\section*{Acknowldgements}
Supported by the German Federal Ministry for Economic Affairs and Energy,
in the joint project: 
``\textbf{MathEnergy} -- Mathematical Key Technologies for Evolving Energy Grids'',
sub-project: Model Order Reduction (Grant number: 0324019\textbf{B}).

\bibliographystyle{plainurl}
\bibliography{mor,csc,software}



\end{document}